\newcommand{\cands}{\mathcal{C}}
\newcommand{\election}{\mathcal{E}}
\newcommand{\ballots}{\mathcal{B}}
\newcommand{\threshold}{\ensuremath{\tau}}
\newcommand{\unmentioned}{\mathcal{U}}
\newcommand{\viable}{\mathcal{V}}
\newcommand{\winners}{\mathcal{W}}
\newcommand{\losers}{\mathcal{L}}
\newcommand{\asserts}{\mathcal{A}}
\newcommand{\frontier}{F}
\newcommand{\ignore}[1]{}
\newcommand{\Viable}{\mbox{\emph{Viable}}}
\newcommand{\NonViable}{\mbox{\emph{NonViable}}}
\title{Auditing Hamiltonian Elections}
\author{
Michelle Blom    \inst{1}   \orcidID{0000-0002-0459-9917}  \and
Philip B. Stark  \inst{2}   \orcidID{0000-0002-3771-9604}  \and
Peter J. Stuckey \inst{3}   \orcidID{0000-0003-2186-0459}  \and
Vanessa Teague   \inst{4}   \orcidID{0000-0003-2648-2565}  \and
Damjan Vukcevic  \inst{5,6} \orcidID{0000-0001-7780-9586}}
\authorrunning{Blom M, Stark PB, Stuckey PJ, Teague V, Vukcevic D}
\institute{
School of Computing and Information Systems, University of Melbourne,
Parkville, Australia \\
\email{michelle.blom@unimelb.edu.au}
\and
Department of Statistics, University of California, Berkeley, USA
\and
Department of Data Science and AI, Monash University, Clayton, Australia
\and
Thinking Cybersecurity Pty. Ltd.
\and
School of Mathematics and Statistics, University of Melbourne, Parkville,
Australia
\and
Melbourne Integrative Genomics, University of Melbourne, Parkville, Australia}
\date{}
\begin{document}
\maketitle

\begin{abstract}
Presidential primaries are a critical part of the United States Presidential
electoral process, since they are used to select the candidates in the
Presidential election.  While methods differ by state and party, many primaries
involve proportional delegate allocation using the so-called Hamilton method.
In this paper we show how to conduct risk-limiting audits for delegate
allocation elections using variants of the Hamilton method where the viability
of candidates is determined either by a plurality vote or using instant runoff
voting.  Experiments on real-world elections show that we can audit primary
elections to high confidence (small risk limits) usually at low cost.
\end{abstract}

\section{Introduction}

Presidential primary elections are a critical part of the United States
electoral process, since they are used to  select the final candidates
contesting the Presidential election for each of the major parties.   For that
reason it is important that the result of these primaries be trustworthy.
While the method used for primary elections differs by party and state, the
majority of such elections use delegate allocation by proportional
representation, the so-called Hamilton method, named after its inventor,
Alexander Hamilton.

Risk-limiting audits (RLAs)~\cite{stark2008conservative} require a durable,
trustworthy record of the votes, typically paper ballots marked by hand, kept
demonstrably secure.  RLAs end in one of two ways: either they produce strong
evidence that the reported winners really won, or they result in a full manual
tabulation of the paper records.  If a RLA leads to a full manual tabulation,
the outcome of the tabulation replaces the original reported outcome if they
differ, thus correcting the reported outcome (if the paper trail is
trustworthy).  The probability that a RLA fails to correct a reported outcome
that is incorrect before that outcome becomes official is bounded by a ``risk
limit.'' An RLA with a risk limit of 1\%, for example, has at most a 1\% chance
of failing to correct a reported election outcome that is wrong; equivalently,
it has at least a 99\% chance of correcting the reported outcome if it is
wrong.  RLAs are becoming the de-facto standard for post-election audits.  They
are required by statute in Colorado, Nevada, Rhode Island, and
Virginia,\footnote{Virginia's audit does not take place until after the outcome
is certified, so it cannot limit the risk that an incorrect reported outcome
will become final: technically, it is not a RLA.} for some government elections
(not primaries which are party elections), and have been piloted in over a
dozen US states and Denmark.  They are recommended by the US National Academies
of Science, Engineering, and Medicine and endorsed by the American Statistical
Association.  Risk-limiting audits of limited scope have begun to be applied to
US primary elections; our methods here would allow RLAs of the full elections.

In this paper we describe the first method that we are aware of for conducting
an RLA for delegate allocation by proportional representation elections, which
we call \emph{Hamiltonian elections}.  In addition to primary elections in some
states in the USA, this type of election is used in Russia, Ukraine, Tunisia,
Taiwan, Namibia and Hong Kong.  We do so by adapting auditing methods designed
for plurality and instant runoff voting (IRV) elections for auditing the
viability of candidates, and generating a new kind of audit for proportional
allocation.

A delegate allocation election by proportional representation is a complex form
of election.  Rather than simply electing candidates, the result of the
election is to assign some number of delegates to some of the candidates.  In
the first stage of the election, the process determines the subset of
candidates that are eligible or \textit{viable} (for Democratic primaries,
candidates need to receive at least 15\% of the vote).  In the second step,
delegates are awarded to these viable candidates in approximate proportion to
their vote.  An RLA must determine the correctness of both the set of viable
candidates and the number of delegates assigned to each viable candidate.

The first stage of the election uses either simple plurality voting, where each
ballot is a vote for at most one candidate, or IRV, where each ballot is a
ranking of some or all candidates.  In IRV, candidates with the fewest
first-choice ranks are eliminated and each ballot that ranked them first is
reassigned to the next most-preferred ranked candidate on that ballot. 

There is considerable work on both comparison audits and ballot-polling audits
for plurality elections~\cite{lindemanStark12,shangrla}, but few for more
complex election types.  Sarwate {\it et al.}~\cite{sarwate2013risk} consider
IRV and some other preferential elections.  Kroll {\it et al.}~\cite{kroll2014}
show how to audit the overall US electoral college outcome, but not the
allocation of individual delegates.  Stark and Teague~\cite{StarkTeague2014}
devise audits for the D'Hondt method for proportional representation, which is
related to but distinct from Hamiltonian methods.  Blom {\it et
al.}~\cite{blom18,blom16} describe efficient audits for IRV.  As far as we
know, there is no other auditing method for Hamiltonian Elections, nor any that
combines a proportional representation method with IRV.

\section{Hamiltonian Elections}
\label{sec:democratic}

We have a set of $n$ candidates $\cands$, a set of cast ballots\footnote{We do
not distinguish between ballots and ballot cards; in general, ballots consist
of one or more cards, of which at most one contains any given contest.}
$\ballots$, and a number of delegates $D$ to be awarded to the candidates based
on the votes.  The \emph{Hamilton} or \emph{largest remainder} method, invented
by Alexander Hamilton in 1792, allocates the delegates in approximate
proportion to the votes the candidates received.

In a \emph{pure Hamiltonian election}, also known as the \emph{Hamilton
method}, delegates are directly allocated based on the proportion of the vote.
But most delegate elections use some form of \emph{exclusion} of some
candidates before the delegates are apportioned.

A \emph{Hamiltonian election with exclusion} first determines which candidates
in $\cands$ are \textit{viable}---eligible to be awarded one or more delegates.
Typically, exclusion involves a plurality vote.  Each ballot is a vote for at
most one candidate.  If a candidate receives a threshold proportion
$\threshold$ of the valid votes, the candidate is considered
viable.\footnote{There are more complicated alternate rules for the case where
no candidate reaches $\threshold$; we do not consider this case here.} The
votes cast for viable candidates are referred to as \textit{qualified votes}.
The qualified votes are used to allocate delegates, as described later in this
section.

\begin{example}\label{ex:plural}
Consider an example Hamiltonian election with exclusion with 4~candidates, Ann,
Bob, Cal, and Dee and a viability threshold of $\threshold = 15\%$.
Figure~\ref{fig:plural}(a) shows the tally of votes for each candidate, and the
percentage of the overall vote that each candidate received.  Ann and Bob
received more than 15\% of the vote and are viable candidates.
\begin{figure*}[t]
\centering
\begin{tabular}{c@{~~~}c}
\begin{tabular}{|l|rr|}
\hline
Candidate & Votes & Proportion \\ \hline
Ann & 57,532 & 76.1\% \\
Bob & 15,630 & 20.6\% \\
Cal & 1,600  & 2.1\% \\
Dee & 846 & 1.1\% \\
\hline
Total Votes & \bf 75,608 & \bf 100.0\% \\
\hline
\end{tabular}
&
\begin{tabular}{|l|rr|}
\hline
Candidate & Votes & Proportion \\ \hline
Ann & 57,532 & 78.6\% \\
Bob & 15,630 & 21.4\% \\
\hline
Qualified Votes & \bf 73,162 & \bf 100.0\% \\
\hline
\end{tabular} \\
(a) & (b)
\end{tabular}
\caption{(a) Votes and (b) Qualified Votes in a Hamiltonian election with
plurality-based exclusion.\label{fig:plural}}
\end{figure*}
\end{example}

For elections with many candidates, a plurality exclusion might eliminate all
of them.  In an \emph{instant-runoff Hamiltonian election} the viable
candidates are determined by a form of IRV.  Each ballot is now a partial
ranking of the candidates, and the viable candidates are determined as follows:
\begin{enumerate}
\item Initialize the set of candidates.  Each ballot is put in the pile for the
    candidate ranked highest on that ballot. 
\item If every (remaining) candidate has $\geqslant \threshold$ of the votes in
    their pile, we finish the candidate selection process.  All of these
    remaining candidates are viable.
\item Otherwise, the candidate with the lowest tally (fewest ballots in their
    pile) is eliminated, and each of their ballots is moved to the pile of the
    next ranked remaining candidate on the ballot. A ballot is \emph{exhausted}
    if all further candidates mentioned on the ballot have already been
    eliminated.
\item We then return to step~2.
\end{enumerate}

\begin{example}\label{ex:irv}
Consider an instant-runoff Hamiltonian election with the same four candidates
as Example~\ref{ex:plural}, the same threshold, and
50,000 ballots with ranking [A,D,C,B] (that is, Ann followed by Dee, then Cal,
       then Bob),
9,630 of [B,C],
6,000 of [C,B],
1,600 of [C],
7,532 of [D,A,C],
and 846 of [D,C].
The IRV election proceeds as follows.  In the first round Cal has the lowest
tally, 7,600 votes, which is 10.052\% of the total vote, and hence less than
15\%.  Cal is eliminated: the 6,000 ballots [C,B] are transferred to Bob, and
the 1,600 ballots [C] are exhausted (removed from consideration).  In the next
round Dee has the lowest tally, 8,378 votes which is 11.080\%, so Dee is
eliminated.  The 7,632 [D,A,C] ballots  are transferred to Ann, and the
remaining 836 [D,C] ballots are exhausted.  In the final round, Bob has the
lowest tally, 20.672\% of the vote, and the process ends since this is greater
than 15\%.  The election is summarized in Figure~\ref{fig:irv}.
\begin{figure*}
\begin{tabular}{|l|rrr|rrr|rrr|}
\hline
& \multicolumn{3}{c|}{Round 1} & \multicolumn{3}{c|}{Round 2} &
\multicolumn{3}{c|}{Final Result} \\
Cand. & Ballot & Number & Prop. & Ballot & Number & Prop. & Ballot & Number & Prop. \\ \hline
Ann & [\textbf{A},D,C,B] & 50,000 & 66.1\% & [\textbf{A},D,C,B] & 50,000 & 66.1\% & [\textbf{A},D,C,B] & 50,000 &  \\
    &  & & & & & & [D,\textbf{A},C] & 7,532 & 76.1\% \\
    \hline
Bob & [\textbf{B},C] & 9,630 & 12.7\% & [\textbf{B},C] & 9,630 &  & [\textbf{B},C] & 9,630 & \\
    &  &            &        & [C,\textbf{B}] &6,000 & 20.7\% & [C,\textbf{B}] &6,000 & 20.7\%  \\
    \hline
Cal & [\textbf{C},B] & 6,000 & & --- &  &&  --- &&  \\
    & [\textbf{C}] & 1,600 & 10.1\% & --- &&  ---&  --- && --- \\
    \hline
Dee & [\textbf{D},A,C] &7,532 &  & [\textbf{D},A,C] & 7,532 &  &  --- && \\
    & [\textbf{D},C] &   846 & 11.1\% & [\textbf{D},C]  &  846 & 11.1\% &  --- && --- \\ \hline
Total & & \bf 75,608 & 100.0\% && 73,738 & 98.6\% & & 73,162 & 96.8\% \\
\hline
\end{tabular}
\caption{IRV election for four candidates showing the elimination of first Cal,
and then Dee, and the final round results.\label{fig:irv}}
\end{figure*}
\end{example}

The second stage in the process is to assign delegates to candidates on the
basis of their tallies.  We first compute, for each viable candidate $c$, the
proportion of the \textit{qualified votes} in their tally, $p_c$. Recall that
we refer to ballots belonging to viable candidates as \textit{qualified} votes.
We denote the number of qualified votes as $Q$.  In the context of IRV, ballots
are qualified if they end up in the tally of a viable candidate.  Non-qualified
ballots result from exhaustion: every candidate in the ballot ranking has been
eliminated (is non-viable).  Where a plurality contest determines viability,
all votes for a viable candidate are qualified.

We denote the set of viable candidates as $\viable$.  Delegates are awarded to viable
candidates as follows:
\begin{enumerate}
\item We compute for each viable candidate $c$ their \textit{delegate quota},
    $q_c = D \times p_c$ where $p_c$ is the proportion of the qualified vote
    given to $c$ (their final tally divided by $Q$). 
\item We assign $i_c = \lfloor q_c \rfloor$ delegates to each candidate $c \in
    \viable$.
\item At this stage, there are $r = D - \sum_{c \in \viable} i_c$ remaining
    delegates to assign. We assign these delegates to the $r$ candidates with
    the largest value of the remainder $q_c - i_c$. One delegate is given to
    each of these $r$ candidates.
\item At this stage, each viable candidate $c$ has received $a_c$ total
    delegates, where $a_c$ is $q_c$ rounded either up or down.
\end{enumerate}

\begin{example}\label{ex:del}
The end result of Examples~\ref{ex:plural} and~\ref{ex:irv} is the same.  The
qualified vote is $Q = 73,162$.  The proportions of the qualified vote in
viable candidates' tallies are: $p_{Ann} = 0.786$; and $p_{Bob} = 0.214$.
Assuming there are $D = 5$ delegates to allocate, we find $q_{Ann} = 3.932$ and
$q_{Bob} = 1.068$.  We initially allocate 3 delegates to Ann and 1 to Bob.  By
comparing the remainders $0.932$ and $0.068$, we allocate the last delegate to
Ann.  So $a_{Ann} = 4$ and $a_{Bob} = 1$.
\end{example}

\section{Auditing Fundamentals}
\label{sec:fund}

A risk-limiting audit is a statistical test of the hypothesis that the reported
outcome is incorrect.  (In the current context, the reported outcome is the
number of delegates finally awarded to each candidate.) If that hypothesis is
not rejected, there is a full hand tabulation, which reveals the true outcome.
If that differs from the reported outcome, it replaces the reported outcome.
The significance level of the test is called the \emph{risk limit}.  A
risk-limiting audit of a trustworthy paper trail of votes limits the risk that
an incorrect electoral outcome will go uncorrected.

Two common building blocks for audits are to compare manual interpretation of
randomly selected ballots or groups of ballots with how the voting system
interpreted them (a \emph{comparison} audit~\cite{stark10}), and to use only
the manual interpretation of the randomly selected ballots (a
\emph{ballot-polling} audit~\cite{lindemanEtal12}).  Ballot polling requires
less infrastructure (some voting systems do not generate or cannot export the
data required for a comparison audit) but generally requires inspecting more
ballots.

Recent work \cite{shangrla} shows that audits of most social choice functions
can be reduced to checking a set of \emph{assertions}.  If all the assertions
are true, the reported election outcome is correct.  Each assertion is checked
by conducting a hypothesis test of its logical negation.  To reject the
hypothesis that the negation is true is to conclude that the assertion is true.
Each hypothesis is tested using a statistic calculated from the audit data.
Larger values of the statistic are unlikely if the corresponding assertion is
false.  If the statistic takes a sufficiently large value, that is statistical
evidence that the assertion is true, because such a large value would be very
unlikely if the assertion were false.  The statistic is generally calibrated to
give \emph{sequentially valid} tests of the assertions, meaning that the sample
of ballots can be expanded at will and the statistic can be recomputed from the
expanded sample, while controlling the probability of erroneously concluding
that the assertion is true if the assertion is in fact false.

The initial sample size is generally chosen so that there is a reasonable
chance that the audit will terminate without examining additional ballots if
the reported results are approximately correct.  If the initial sample does not
give sufficiently strong evidence that all the assertions are correct, the
sample is augmented and the condition is checked again.\footnote{For
sequentially valid test statistics, the sample can be augmented at will; for
other methods, there may be an escalation schedule prescribing a sequence of
sample sizes before conducting a full manual tabulation.} The sample continues
to expand until either all the assertions have been confirmed\footnote{In other
words, the hypothesis that the assertion is false has been rejected at a
sufficiently small significance level.} or the sample contains every ballot,
and the correct result is therefore known.  At any point during the audit, the
auditor can choose to conduct a full manual tabulation.  If the audit leads to
a full manual tabulation, the outcome of that tabulation replaces the reported
outcome if they differ.

The basic assertions for Hamiltonian elections are:
\begin{description}
\item[(Super/sub) majority $p > t$,] where $p$ is the proportion of
    ballots that satisfy some condition (usually the condition is that the
    ballot has a vote for a particular candidate) among ballots that meet some
    validity condition, and $t$ is a proportion in $(0,1]$.
\item[Pairwise majority $p_A > p_B$,] where $p_A$ and $p_B$ are the
    proportions of ballots that meet two mutually exclusive conditions $A$ and
    $B$, among ballots that meet some validity condition.  (Typically, among
    the ballots that contain a valid vote, $A$ is a ballot with a vote for one
    candidate, and $B$ is a ballot with a vote for a different candidate).
\item[Pairwise difference $p_A > p_B + d$,] where $p_A$ and $p_B$ are
    the proportions of ballots that meet two mutually exclusive conditions
    among ballots that meet some validity condition, and $d$ is a constant in
    the range $(-1,1)$.  This is a new form of assertion not previously used,
    that extends pairwise majority assertions.  It is necessary for auditing
    delegate assignment in Hamiltonian elections.
\end{description}

In the SHANGRLA approach to RLAs \cite{shangrla}, each assertion is transformed
into a canonical form: the mean of an \emph{assorter} (which assigns each
ballot a nonnegative, bounded number) is greater than 1/2.  The value the
assorter assigns to a ballot is generally a function of the votes on that
ballot and others and the voting system's interpretation of the votes on that
ballot and others.

For majority assertions, a ballot that satisfies the condition is assigned the
value $1/(2t)$; a valid ballot that does not satisfy the condition is assigned
the value 0; and an invalid ballot is assigned the value 1/2.  For pairwise
majority assertions, a ballot for class A counts as 1 and a ballot for class B
counts as 0.  Ballots that fall outside both classes count as 1/2.

\ignore{
We now construct an assorter for pairwise differences.  Define
$$ f(b) \equiv \left \{ \begin{array}{ll}
   1-d/2, & \mbox{vote for A} \\
   -d/2, & \mbox{vote for B} \\
   1/2-d/2, & \mbox{valid vote in contest but not for A Nor B} \\
   1/2, & \mbox{no valid vote in contest}
   \end{array}
   \right .
$$
$$
   \bar{f} = \frac{n p_A (1-d/2) + n p_B (-d/2) +
                   n (1-p_A-p_B)(1/2-d/2) + (N-n)(1/2)}
                  {N}
$$
$$
   = \frac{n(p_A/2-p_B/2-d/2) + N/2}{N}
$$
$$
   = \frac{n}{2N}(p_A-p_B-d) + 1/2.
$$
This is greater than 1/2 iff $p_A > p_B + d$, but its minimum is $-d /
2$, not $0$, so make the affine transformation
$$
g \equiv \frac{f + d / 2}{1 + d}.
$$
}

For pairwise difference assertions, we define the assorter $g$ which assigns
ballot $b$ the value:
$$ g(b) \equiv \left \{ \begin{array}{ll}
   1/(1 + d),    & \mbox{$b$ has a vote of class A} \\
   0,            & \mbox{$b$ has a vote of class B} \\
   1/(2(1 + d)), & \mbox{$b$ has a valid vote in the contest that is not
                                                                 in A or B} \\
   1/2,          & \mbox{$b$ does not have  a valid vote in the contest.}
   \end{array}
   \right .
$$
Let $\bar{g}$ be the mean of $g$ over the ballots.  We have that $0 \leqslant
g(b) \leqslant 1/(1+d)$, and $\bar{g} > 1/2$ iff $p_A > p_B + d$.  When $d = 0$
this reduces to the pairwise majority assorter if the ``valid'' category is the
same.

The \emph{margin} $m$ of an assertion $a$ is equal to 2 times the mean of its
assorter (when applied to all ballots $\ballots$) minus 1.  An assertion with a
smaller margin will be harder to audit than an assertion with a larger margin.

\subsection{Estimating Sample Size and Risk}

The sample size required to confirm an assertion depends on the sampling design
and the auditing strategy (e.g., sampling individual ballots or batches of
ballots, using ballot polling or comparison); the ``risk-measuring function''
(see \cite{shangrla}); and the accuracy of the tally, among other things.
Because it depends on what the sample reveals, it is random.

There is some flexibility in selecting a set of assertions to confirm IRV
contests \cite{evote2018b}, so the set can be chosen to minimize a measure of
the anticipated workload.  We will estimate the workload on the assumption that
the assertion is true but the reported tallies are not exactly correct.  We
will use the expected sample size as a measure of workload.\footnote{One might
instead seek to minimize a quantile of the sample size or some other function
of the distribution of sample size, for instance, to account for fixed costs
for retrieving and opening a batch of ballots and per-ballot and per-contest
costs.}

Our auditing approach is applicable to any style of auditing.  The workload,
given a set of assertions, varies depending on the style of audit (e.g.,
ballot-level comparison, batch-level comparison, ballot-polling, or a
combination of those) and the sampling design (e.g., with or without
replacement, Bernoulli, stratified or not, weighted or not).  For the purpose
of illustration, in the examples and experiments in this paper, we assume that
the audit will be a ballot-level comparison audit using sampling with
replacement.

Because the sample is drawn with replacement, the same ballot can be drawn more
than once.  Given an assertion $a$, let $ASN(a,\alpha)$ denote the expected
number of draws required to verify $a$ to risk limit $\alpha$, and if
$\mathcal{A}$ is a set of assertions, let $ASN(\mathcal{A},\alpha)$ denote the
expected number of draws required to verify every assertion $a$ in
$\mathcal{A}$ to risk limit $\alpha$.  $ASN$ depends on several factors: the
risk limit $\alpha$; the expected rate of errors (discrepancies) between paper
ballots and their electronic records of various signs and magnitudes (in the
context of comparison auditing); and the margins of the assertions.

We estimate $ASN(\mathcal{A}, \alpha)$ by simulation.  We simulate the sampling
of ballots, one at a time.\footnote{The procedure used to calculate the $ASN$
for an assertion with margin $m$ is available in the public repositories
\url{https://github.com/michelleblom/primaries} and
\url{https://github.com/pbstark/SHANGRLA}.}  An ``overstatement'' error  is
introduced with a pre-specified probability $e$.  If the sample reveals one or
more overstatements, the measured risk (i.e., the $P$-value of the hypothesis
that the assertion is false) increases by an amount that depends on margin $m$.
Otherwise, the measured risk decreases by an amount that depends on $m$.  We
continue to sample ballots until the measured risk falls below $\alpha$ or
until every ballot has been manually reviewed, in which case the outcome based
on the manual interpretations replaces the original reported results.  We take
the median of the number of ballots sampled over $N$ simulations as an estimate
of $ASN(\mathcal{A}, \alpha)$.  Inaccuracy of this estimate affects whether the
selected assertions result in the smallest expected workload, but does not
affect the risk limit.  For the examples and experiments in this paper, we use
$e = 0.002$ (equivalent to 2 errors per 1,000 ballots), $N = 20$, and a risk
limit of 5\%.

\ignore{
For the types of assertions listed above, for an unstratified sample of
individual ballots, and on the assumption that the sample size remains a small
fraction of the total number of ballots, the $ASN$ scales like the reciprocal
of the ``assorter margin'' for ballot-level comparison audits and like the
square of the reciprocal of the ``assorter margin'' for ballot-polling audits.
}

\section{Auditing Viability}
\label{sec:eligible}

The first stage of the election identifies the viable candidates.  We introduce
notation for the assertions we will use to audit viability, as follows:
\begin{itemize}
\item $\Viable(c, E, t)$: Candidate $c$ has at least proportion $t$ of the vote
    after the candidates in set $E$ have been eliminated.  This amounts to a
    simple majority assertion $p_c > t$ after candidates in $E$ are
    eliminated.
\item $\NonViable(c, E, t)$: Candidate $c$ has less than proportion $t$ of the
    vote after candidates $E$ have been eliminated. This amounts to a simple
    majority assertion $\bar{p}_c > 1-t$ where $\bar{p}_c$ is the
    proportion of valid votes for candidates other than $c$ after candidates
    $E$ are eliminated.
\item $\mbox{\emph{IRV}}(c, c', E)$: Candidate $c$ has more votes than
    candidate $c'$ after candidates $E$ have been eliminated. This amounts to a
    pairwise majority assertion.
\end{itemize}
If the first stage is a plurality vote, $E \equiv \emptyset$: the elimination
in the first stage only occurs for $\mbox{\emph{IRV}}$.

Consider an election $\election = \langle \cands, \ballots, \threshold \rangle$
with candidates $\cands$, cast ballots $\ballots$, and viability threshold
$\threshold$ ($\threshold = 0.15$ for the primary elections we will examine).
The outcome of this election is a set of viable candidates, $\viable \subseteq
\cands$, together with, in the case of instant runoff Hamiltonian elections, a
sequence of eliminated candidates, $\pi$.  To check that the set of candidates
reported to be viable really are the viable candidates, we test assertions that
rule out all all other possibilities.  Consider the subset $\viable' \subseteq
\cands$, where $\viable' \neq \viable$. We can demonstrate that $\viable'$ is
not the true set of viable candidates by showing that some candidate $c \in
\viable'$ does not belong there.  \ignore{-- i.e., they do not have at least
    $\threshold$\% of the vote when all candidates not in $\viable'$ have been
eliminated. } We can also rule out $\viable'$ as an outcome by showing that
there is a candidate $c \notin \viable'$ that does in fact belong in the viable
set.  We aim to find the `least effort' set of assertions $\asserts$ that, if
shown to hold in a risk-limiting audit, confirm that (i) each candidate in
$\viable$ is viable, and (ii) no candidate $c' \notin \viable$ is viable.

\subsection{Viability: Plurality Hamiltonian Elections}

For each viable candidate $v \in \viable$ we need to verify the assertion
$\Viable(v,\emptyset,\threshold)$.  For each non-viable candidate $n \in
\cands\setminus\viable$ we need to verify the assertion
$\NonViable(n,\emptyset,\threshold)$.  Let $\asserts$ be the union of these two
sets of assertions.  Note that $\asserts$ rules out any other set of viable
candidates $\viable' \neq \viable$.  

\begin{example}
To audit the first stage of the election
of Example~\ref{ex:plural}, we verify the assertions
$\asserts = \{
\Viable(\text{Ann},\emptyset,0.15)$,
$\Viable(\text{Bob},\emptyset,0.15)$,
$\NonViable(\text{Cal},\emptyset,0.15)$,
$\NonViable(\text{Dee},\emptyset,0.15)\}$.
The margins associated with these assertions are 4.073, 0.378, 0.152, and
0.163, respectively.  The expected number of ballots we need to compare to the
corresponding cast vote records to audit these assertions, assuming an
overstatement error rate of 0.002 and a risk limit of $\alpha = 5\%$, are,
respectively 1, 17, 46, and 42.  The overall ASN for the audit is 46 ballots.
\end{example}

\ignore{
Ann: 57532, share = 1/0.30, margin= 4.073
Bob: 15630, margin = 0.378
Cal: 1600, margin = 0.152
Dee: 846, margin = 0.163
Total: 75608
ex: 2446
}

\subsection{Viability: Instant-Runoff Hamiltonian Elections}

Efficient RLAs for IRV have been devised only recently \cite{evote2018b}.  To
audit the first stage of an IRV Hamiltonian election we must eliminate the
possibility that a different set of candidates is viable.  This means that we
need to look at every other set of candidates, and propose an assertion that
will show that set is not viable. 

In contrast to auditing a simple IRV election, where there are $|\cands| - 1$
potential winners other than the reported winner, a Hamiltonian election
typically has many more.  Let $M = \lfloor 1/\threshold \rfloor$ be the maximum
possible number of viable candidates.  The number of possible winner sets $O$
is
$$
O \equiv \binom{|\cands|}{M} + \binom{|\cands|}{M-1} + \cdots
  + \binom{|\cands|}{1}.
$$

We can show that a subset of candidates $\viable'$ is not the set of viable
candidates in a number of ways:
\begin{itemize}
\item we could show that the tally of at least one $c \in \viable'$ does not
    reach the required threshold assuming all candidates not mentioned in
    $\viable'$ have been eliminated
\item we could show that there is a candidate $c' \notin \viable'$ that is
    viable on the basis of their first preferences, so any potential set of
    viable candidates must include $c'$
\item we could show that the unmentioned candidates could not have been
    eliminated in a sequence that would result in $\viable'$.
\end{itemize}

\subsubsection{Reducing the set of subsets}

While there are many possible alternate winner sets $\viable'$, we can rule out
many of these easily.  We examine the assertions $\Viable(w, \emptyset,
\threshold)$ for any candidate $w \in \cands$ who had more than the proportion
$\threshold$ of the vote initially.  This assertion will be easy to verify, as
long as the proportion is not too close to $\threshold$.  This assertion rules
out any subset $\viable'$ where $w \not \in \viable'$.  Let $\winners$ be the
set of candidates where this assertion is expected to hold.

Next we examine the assertions $\NonViable(l, \cands \setminus \winners
\setminus \{l\}, \threshold)$ for those candidates $l$ who are not mentioned in
at least $\threshold$ of the ballots, when all but the definite winners
$\winners$ and $l$ are eliminated.  In this case candidate $l$ can never reach
$\threshold$ proportion of the votes.  Again this assertion is easy to verify
as long as the proportion of such votes is not close to $\threshold$.  This
assertion removes any subset $\viable'$ where $l \in \viable'$.  Let $\losers$
be the set of candidates where this assertion is expected to hold.

We collect together $\asserts = \{ \Viable(w, \emptyset, \threshold) \mid w \in
\winners \} \cup \{ \NonViable(l, \cands \setminus \winners \setminus \{l\},
\threshold) \mid l \in \losers\}$.  If these assertions hold, we only need to
consider subsets of viable candidates $\textbf{V} = \{ \viable' \subseteq
\cands \mid \winners \subseteq \viable', \viable' \cap \losers = \emptyset\}
\setminus \{ \viable\}$.  There are only $|\textbf{V}|$ subsets to further
examine, where
$$
|\textbf{V}| = \binom{|\cands\setminus \winners \setminus \losers|}
                     {M - |\winners|} +
               \cdots +
               \binom{|\cands\setminus \winners \setminus\losers|}{1} - 1
$$

\subsubsection{Selecting assertions for the remaining subsets}

We now need to select a set of assertions that rule out any alternate set of
viable candidates $\viable' \in \textbf{V}$.  To form these assertions, we
visualise the space of alternate election outcomes as a tree.  We use a
branch-and-bound algorithm to find a set of assertions that, if true, will
prune (invalidate) all branches of this tree.   At the top level of this tree
is a node for each possible $\viable' \in \textbf{V}$.  Each node defines an
(initially empty) sequence of candidate eliminations, $\pi$, and a set of
viable candidates, $\viable'$. These nodes form a frontier, $\frontier$.

Our algorithm maintains a lower bound $\text{LB}$ on the estimated auditing
effort (EAE) required to invalidate all alternate election outcomes, initially
setting $\text{LB} = 0$.  For each node $n = (\emptyset, \viable')$ in
$\frontier$, we consider the set of assertions that could invalidate the
outcome that it represents.  Two kinds of assertion are considered at this
point:
\begin{itemize}
\item $\Viable(c', \losers, t)$ for each candidate $c' \in \cands$ that does
    not appear in $\viable'$, and whose first preference tally exceeds $t$
    proportion of the vote when only candidates in $\losers$ are eliminated;
\item $\NonViable(c, \cands\setminus\viable', t)$ for each candidate $c \in
    \viable'$ whose tally, if all candidates $c'\in \cands\setminus\viable'$
    have been eliminated, falls below $t$ proportion of the vote.
\end{itemize}
We assign to $n$ the assertion $a$ from this set with the smallest EAE
(EAE[$n$] = $\mbox{\emph{ASN}}(\{a\},\alpha)$ where we use the method for
estimating ASN previously described.  If no such assertion  can be formed for
$n$, we give $n$ an EAE of $\infty$, EAE[$n$] = $\infty$.  We then select the
node in $\frontier$ with the highest EAE to expand.

To expand a node $n = (\pi, \viable')$, we consider the set of candidates in
$\cands$ that do not currently appear in  $\pi$ or $\viable'$.  We denote this
set of `unmentioned' candidates, $\unmentioned$. For each candidate $c' \in
\unmentioned$, we form a child of $n$ in which $c'$ is appended to the front of
$\pi$.  For instance, the node $([c'], \viable')$ represents an outcome in
which $c'$ is the last candidate to be eliminated, after which all remaining
candidates, $c \in \viable'$, have at least $t = T$ proportion of the cast
votes.  All unmentioned candidates are assumed to have been eliminated, in some
order, before $c'$.  For each newly created node, we look for an assertion that
could invalidate the corresponding outcome.  Two kinds of assertion are
considered to rule out an outcome $n' = ([c'|\pi'], \viable')$:
\begin{itemize}
\item $\Viable(c', \unmentioned\setminus\{c'\},t)$ for each candidate $c' \in
    U$ that has at least $t$ proportion of the vote in the context where
    candidates $\unmentioned\setminus\{c'\}$ have been eliminated. Candidate
    $c'$ thus cannot have been eliminated at this point;

\item $\mbox{\emph{IRV}}(c', c, \unmentioned\setminus\{c'\})$ for each
    candidate $c' \in U$ that has a higher tally than some candidate $c \in
    \pi' \cup \viable'$ in the context where candidates
    $\unmentioned\setminus\{c'\}$ have been eliminated. Candidate $c'$ thus
    cannot have been eliminated at this point.
\end{itemize}
We assign to each child of $n$ the assertion $a$ from this set with the
smallest $ASN(\{a\},\alpha)$, and replace $n$ on our frontier with its
children. If neither of the above two types of assertion can be created for a
given child of $n$, the child is labelled with an EAE of $\infty$ (EAE[$n'$] =
$\infty$).  We continue to expand nodes in this fashion until we reach a leaf
node, $l = (\pi, \viable')$, where $\pi \cup \viable' = \cands$ (all candidates
are mentioned either in the elimination sequence $\pi$ or in the viable set
$\viable'$).  We assign to $l$ an invalidating assertion of the above two
kinds, if possible. We consider all the nodes in the branch that $l$ sits on,
and select the node $n_b$ associated with the least cost assertion $a$. We add
$a$ to our set of assertions to audit $\asserts$, prune $n_b$ and all of its
descendants from the tree, and  update our lower bound on audit cost
$\text{LB}$ to max($\text{LB}$, EAE[$a$]). We then look at all nodes on our
frontier that can be pruned with an assertion that has an EAE $\leqslant
\text{LB}$.  We add those assertions to $\asserts$, and prune the nodes from
the frontier. The algorithm stops when the frontier is empty. If we discover a
branch whose best assertion has an EAE of $\infty$, the algorithm stops in
failure---indicating that a full manual count of the election is required.

This branch-and-bound algorithm is a variation of that described by
\cite{evote2018b,RAIRE} for generating an audit specification for an IRV
election. It has been altered for the context where the ultimate outcome is a
set of winning candidates---the viable candidates---and not one winner, left
standing after all others are eliminated.

\section{Auditing Delegate Assignment}
\label{sec:delegate}

The Hamilton method for proportional representation is used to assign delegates
to viable candidates.  It might appear that auditing the Hamilton method
requires checking some delicate results, for instance, whether candidate $A$
received at least 2 delegate quotas when  candidate $A$ actually received 2.001
delegate quotas.  However, this is not necessary, because candidate $A$ can
receive 2 delegates without having at least 2 delegate quotas.  For example, if
$A$ receives 1.999 quotas $A$ may still end up with 2 delegates.  Our auditing
method avoids checking such things.

The audit instead examines all pairs of viable candidates, including those
receiving no delegates.  For each pair of viable candidates $n$ and $m$ we
check whether $(q_n - (a_n - 1)) < 1 + (q_m - (a_m - 1))$ which
requires that the quota of $n$ is not 1 more than the quota for $m$, after
removing all received delegates but the last.  This can be equivalently
rewritten as
\begin{equation}
p_m > p_n + \frac{a_m - a_n - 1}{D}, \quad n,m \in \viable, n \neq m.
\label{as:comp}
\end{equation}
In the case that $q_m$ was rounded up and $q_n$ was rounded down, this
captures that the remainder for $m$ was greater than the remainder for $n$:
$p_m D - (a_m - 1) > p_n D - a_n$.

We show that if the delegates  are wrong with respect to the true votes, then
one of these assertions is violated.

\begin{theorem}
Suppose the number of assigned delegates $a_c$ to each viable delegate $c$ is
incorrect, then one of the assertions of Equation~\ref{as:comp} will be
violated.
\end{theorem}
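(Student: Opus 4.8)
The plan is to prove the contrapositive: assume that \emph{every} assertion in Equation~\ref{as:comp} holds when evaluated against the \emph{true} vote proportions $p_c$, and show that the reported delegate vector $(a_c)_{c\in\viable}$ is exactly what the Hamilton method produces from those true proportions, hence correct. Write $q_c = D p_c$ for the true delegate quota of $c\in\viable$ and set $\delta_c = q_c - a_c$. Since the true proportions sum to $1$ over $\viable$ and the reported delegates sum to $D$, we have $\sum_{c\in\viable}\delta_c = 0$; this normalisation will do most of the work.

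First I would put the assertions in additive form. Multiplying through by $D$, the assertion for the ordered pair $(n,m)$ is exactly $\delta_n - \delta_m < 1$. Because Equation~\ref{as:comp} quantifies over \emph{all} ordered pairs, both $\delta_n - \delta_m < 1$ and $\delta_m - \delta_n < 1$ hold for every pair, so $\max_{c}\delta_c - \min_{c}\delta_c < 1$: all the $\delta_c$ lie in a common window of width less than $1$. Since $\sum_c \delta_c = 0$ there is some $c$ with $\delta_c \leqslant 0$ and some with $\delta_c \geqslant 0$, so $\min_c\delta_c \leqslant 0 \leqslant \max_c\delta_c$; combined with the width bound this forces $\delta_c \in (-1,1)$, i.e.\ $q_c - 1 < a_c < q_c + 1$, for every $c$. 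As each $a_c$ is an integer, this already shows $a_c \in \{\lfloor q_c\rfloor, \lceil q_c\rceil\}$ --- the reported allocation rounds each \emph{true} quota up or down, exactly as the Hamilton method must.

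It remains to show the rounding is done the right way, i.e.\ the candidates rounded up are precisely those with the largest remainders. Summing $a_c \in \{\lfloor q_c\rfloor, \lfloor q_c\rfloor + 1\}$ and using $\sum_c a_c = D$ shows that exactly $r = D - \sum_c\lfloor q_c\rfloor$ candidates are rounded up, which is the number of leftover delegates the Hamilton method distributes from the true quotas. Now suppose $n$ is rounded up (so $q_n\notin\mathbb{Z}$, $\delta_n = (q_n-\lfloor q_n\rfloor) - 1$) and $m$ is rounded down ($\delta_m = q_m - \lfloor q_m\rfloor$); the assertion $\delta_m - \delta_n < 1$ rearranges to $q_m - \lfloor q_m\rfloor < q_n - \lfloor q_n\rfloor$. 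Hence every rounded-up candidate has a strictly larger remainder than every rounded-down candidate, so the rounded-up set is the set of the $r$ candidates with the largest remainders --- precisely the Hamilton allocation from the true proportions. Thus $(a_c)$ is correct, which is the contrapositive of the theorem.

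I expect the delicate point to be the boundary behaviour rather than the algebra: one must be careful that the \emph{strict} pairwise-difference inequalities genuinely force $\delta_c\in(-1,1)$ and not merely $[-1,1]$, and that ties in the true remainders cannot occur when all assertions hold --- if the $r$-th and $(r{+}1)$-th largest true remainders were equal, no integer allocation with $\sum_c a_c = D$ could satisfy every assertion (some up/down pair gives exactly $\delta_m - \delta_n = 1$), so the audit would simply fail to certify and escalate to a full hand count, which is the safe behaviour. I would include a short remark on this case to keep the argument airtight, but the crux is the first step: the symmetric pair of pairwise-difference assertions confines all the $\delta_c$ to an interval of width less than $1$, and such an interval (necessarily straddling $0$, by the normalisation) is exactly what singles out the Hamilton rounding of the true quotas.
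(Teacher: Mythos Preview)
Your proof is correct and takes a genuinely different route from the paper's. The paper proceeds directly: it picks a candidate $m$ awarded too many delegates and a candidate $n$ awarded too few, then performs a four-way case split on whether each of $m$ and $n$ was rounded up or down in the \emph{true} Hamilton allocation, showing in every case that the assertion $p_m > p_n + (a_m - a_n - 1)/D$ fails. You instead prove the contrapositive uniformly: writing $\delta_c = Dp_c - a_c$, you observe that the full family of assertions is equivalent to $\delta_n - \delta_m < 1$ for all ordered pairs, which confines every $\delta_c$ to an open interval of width less than~$1$; since $\sum_c \delta_c = 0$ that interval must straddle~$0$, forcing $a_c\in\{\lfloor q_c\rfloor,\lceil q_c\rceil\}$ for every~$c$, and the same pairwise inequality then orders the remainders so that exactly the right $r$ candidates are rounded up. Your argument is cleaner --- it replaces the case analysis by a single normalisation step and makes transparent \emph{why} symmetric pairwise-difference assertions suffice --- while the paper's case split stays closer to the operational description of the Hamilton method and exhibits the violated assertion explicitly. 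Your closing remark about ties in the true remainders (where no integer allocation can satisfy every assertion, so the audit necessarily escalates) is a useful observation that the paper leaves implicit.
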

\begin{proof}
Suppose $a'_c$ is the \emph{true} number of delegates that should have been
awarded to each candidate $c$.  Since $\sum_{c \in \viable} a_c$ $= D$ and
$\sum_{c \in \viable} a'_c = D$, and they differ, there must be at least one
candidate $m \in \viable$, where $a_m \geqslant a'_m + 1$, who was awarded too
many delegates, and at least one $n \in \viable$, where $a_n \leqslant a'_n -
1$, who was awarded too few.

Since $a'_m$ is the true number of delegates awarded to $m$ we know that the
(\emph{true}) proportion of the vote for $m$, $p_m$, must be (a) $p_m D < a'_m
\leqslant a_m-1$ if $m$ was rounded up or (b) $p_mD < a'_m +1 \leqslant a_m$ if
$m$ was rounded down.  Similarly, since $a'_n$ is the true number of delegates
awarded to $n$ we know that either (c) $p_n D \geqslant a'_n - 1 \geqslant a_n$
if $n$ was rounded up, or (d) $p_n D \geqslant a'_n \geqslant a_n + 1$ if $n$
was rounded down.

If we add these two inequalities for combinations (a)+(c) or (b)+(d) we get
$p_m D + a_n < p_nD + a_m - 1$.
For the combination (a)+(d) we get
$p_m D + a_n + 1 < p_n D + a_m- 1$.
Any of these cause the assertion
$p_m > p_n + \frac{a_m- a_n -1}{D}$ to be falsified.
For the last case (b)+(c) we need a stricter comparison, which we obtain by
comparing the remainders.  Since $m$ was rounded down and $n$ was rounded up,
we know that remainder for $m$ was less than the remainder for $n$, i.e.,
$p_m D - a'_m < p_n D - (a'_n - 1)$. Hence
$p_m D < p_n D + a'_m - (a'_n - 1) \leqslant p_n D + (a_m - 1) - a_n$.
Again the assertion
$p_m > p_n + \frac{a_m - a_n - 1}{D}$ is falsified.
\qed
\end{proof}

\begin{example}
Consider the delegate allocation of Example~\ref{ex:del}.  Recall that the
proportions of the qualified vote are $p_{\text{Ann}} = 0.7836$ and
$p_{\text{Bob}} = 0.2136$.  We audit that $p_{\text{Ann}} < p_{\text{Bob}} +
4/5$ and $p_\text{Bob} < p_{\text{Ann}} - 2/5$.  These facts require much less
work to prove, than for example auditing that $p_\text{Bob} > 1/5$. The margins
associated with the above pairwise difference assertions are 1.1 and 0.12,
respectively.  Assuming an error rate of 0.002, and a risk limit of $\alpha =
5\%$, the ASNs associated with these assertions are 5, and 59, ballots.
\end{example}

\section{Experiments}
\label{sec:exp}

We consider the set of Hamiltonian elections conducted as part of the selection
process for the 2020 Democratic National Convention (DNC) presidential nominee.
Most of these primaries determine candidate viability via a plurality vote.
Several states, including Wyoming and Alaska, use IRV.  We estimate the number
of ballots we would need to check in a comparison audit of these primaries.
For each of these  primaries, we audit the viability of candidates on the basis
of the statewide vote, and that each  viable candidate deserved the delegates
that were awarded to them. We consider only the delegates that are awarded on
the basis of statewide vote totals (PLEO\footnote{Party Leaders and Elected
Officials} and at-large) as these are readily available.\footnote{Data for
plurality-based primaries was obtained from www.thegreenpapers.com/P20.  Data
for  IRV-based primaries we consider was provided by the relevant state-level
Democrats.} In each proportional DNC primary, viable candidates must attain at
least 15\% of the total votes cast. 

The full code used to generate the assertions for each DNC primary, and
estimate the ASN for each audit, is located at:
\begin{center}
\url{https://github.com/michelleblom/primaries}
\end{center}

Table \ref{tab:PluralityStates} reports the expected number of ballot samples
required to perform three levels of audit in each plurality and IRV-based
primary conducted for the 2020 DNC.\footnote{A small number of DNC 2020
primaries that did not use proportional allocation of delegates were not
considered, in addition to those for which we could not obtain data.} Level~1
checks only that the reportedly viable candidates have at least 15\% of the
vote, and all other candidates do not. Level~2 checks candidate viability and
that  each viable candidate $c$, with $a_c$ allocated delegates, deserved at
least $a_c - 1$ of them.  We introduce this level because, as the table shows,
sometimes the complete auditing of the final delegate counts is difficult.
Level~3 checks candidate viability and that each viable candidate deserved all
of their allocated delegates. The assertions required to check the allocation
of a candidate's final delegate are the hardest to audit.

Of the primaries in Table \ref{tab:PluralityStates}, Maine (ME), New Hampshire
(NH), Washington (WA), Texas (TX), Idaho (ID), Massachusetts (MA), California
(CA), and Minnesota (MN), were considered to be close with differences of less
than 10\% in the statewide vote between the two leading candidates. In Maine,
the difference in the statewide vote for Biden and Sanders was less than 1\% of
the cast vote. Auditing the Maine primary, however, is expected to require only
a sample of 189 ballots.  The Rhode Island (RI) primary, in contrast, requires
a full manual recount.  In RI, Sanders narrowly falls below the 15\% threshold
with 14.93\% of the vote to  Biden's 76.67\%.  The margins that determine the
complexity of these audits are the extent to which a candidates' vote falls
below or exceeds the relevant threshold, and the relative size of the
remainders in candidates' delegate quotas as a proportion of the number of
delegates available.

\begin{table*}[t]
\centering
\caption{Estimated sample size required to audit viability and delegate
distribution (PLEO and at-large) in all proportional (plurality or IRV-based)
DNC primaries in 2020 for which data was available. Levels 1, 2, and 3 audit
candidate viability, that each viable candidate deserved almost all of their
allocated delegates, and that they deserved all of their delegates,
respectively. An error rate of 0.002 (an expectation of 1 error per 1,000
ballots) was used in the estimation of sample sizes. The symbol `--' indicates
that a full recount is required. The number of candidates ($|\cands|$) and
total number of cast ballots ($|\ballots|$) is stated for each election.}
\begin{tabular}{|c|r|r|r|r|r||c|r|r|r|r|r|}
\multicolumn{12}{l}{Plurality-based Primaries} \\
\hline
       &  & & \multicolumn{3}{c||}{ASN ($\alpha = $ 5\%)} &    &    & &
              \multicolumn{3}{c|}{ ASN ($\alpha = $ 5\%)} \\
\cline{4-6} \cline{10-12}
State  & $|\cands|$ & $|\ballots|$ & Level 1 & Level 2 & Level 3    &
State  & $|\cands|$ & $|\ballots|$ & Level 1 & Level 2 & Level 3 \\
\hline
AL & 15&   452,093 & 182 & 182 & 1,352 &  NC & 16& 1,332,382& 350& 350 &808  \\
AR & 18&   229,122 & 121 & 121 & 1,154 &  NE & 4 & 164,582&925 & 925 & 925 \\
AZ & 12&   536,509 & 71  & 71 & 120   & NH  & 34& 298,377 & 104&  104 &  155\\
CA & 21& 5,784,364 & 395 & 1,258 &  3,187,080  & NJ& 3  &958,202 &4,514 & 4,515 & 4,514\\
CO & 13&   960,128 & 42& 42 & --  &  NM & 7 & 247,880 &1,812& 1,812 & 1,812\\
CT &  4&   260,750 & 174& 174 & 174  & NY & 11 & 752,515 & 56& 731 & 486,495\\
DC &  5&   110,688 & 334& 334& 334  &  OH & 11 & 894,383& 61& 334 & -- \\
DE &  3&    91,682 & 80& 80 & 80  &  OK & 14 & 304,281& 649& 649 & 649\\
FL & 16& 1,739,214 & 91& 208 & 766 & OR & 5  & 618,711& 111& 111 & 191\\
GA & 12& 1,086,729 & 107&218 & 218 & PA &  3 & 1,595,508& 48& 167 & 642 \\
ID & 14& 1,323,509 &  143&  143&  143 & PR & 11  & 7,022& 412 & 412 & 412 \\
IL & 12& 1,674,133 & 44& 140 & 620 &  RI  & 7& 103,982&-- & --& -- \\
IN &  9&   474,800 & 391& 391 & 391 & SC & 12 &539,263 & 165& 165 &34,546  \\
KY & 11&   537,905 & 209& 209 & 209 &  SD & 2 & 52,661& 13& 13 & 216 \\
LA & 14&   267,286 & 79& 98& 98 &  TN  & 16& 516,250 &235 & 235 & 1,203 \\
MA & 18& 1,417,498 & 185&  185 & 832&   TX &17 & 2,094,428 &  1,282 &  1,282 &  2,133  \\
MD & 15& 1,050,773 & 83& 170 & 170  &  UT  & 15& 220,582 &262 &  262 &781  \\
ME & 13&   205,937 & 189  &  189 &  189   & VA  & 14 &  1,323,509& 143& 204 & 1,309 \\
MI & 16& 1,587,679 & 57&  118 &  --  &VT  & 17& 158,032 & 289& 289 & 508 \\
MN & 16&   744,198 & 309& 309 &  6,195 &   WA & 15& 1,558,776& 103& 127 &  617 \\
MO & 23&   666,112 & 44& 130 & --  &WI  &14 & 925,065 & 44& 144 & 878 \\
MS & 10&   274,391 & --& -- & --  & WV  & 12& 187,482& 213& 213 & 213 \\
\cline{7-12}
MT  & 4&   149,973 & 5,159& 5,159 & 5,159 &\multicolumn{6}{c|}{} \\
\hline
\multicolumn{12}{l}{} \\
    \multicolumn{12}{l}{IRV-based Primaries} \\
    \hline
AK & 9 & 19,811&  88 & 88 & 88   & WY & 9 & 15,428 & 66 & 87 & 452 \\
\hline
\end{tabular}
\label{tab:PluralityStates}
\end{table*}

Table \ref{tab:CompareHardness} contrasts several of the hardest primaries of
Table \ref{tab:PluralityStates} to audit with some of the easiest. We record,
for each of these primaries: the number of at-large delegates being awarded;
the delegate quotas computed for each viable candidate; the difference between
the decimal portion of these quotas (the remainder) divided by the number of
available delegates; and the estimated auditing effort (ASN) for the primary.
For the first four primaries in the table, the last awarded at-large delegate
is the hardest to audit.

The use of IRV for determining candidate viability does not make a Hamiltonian
election more difficult to audit. While more assertions are created to audit an
IRV-based primary, the difficulty of any audit is based on the cost (ballot
samples required) of its most expensive assertion. Since all assertions are
tested on each ballot examined, the principle cost is retrieving the correct
ballot.  The audit specifications generated for the Wyoming and Alaskan
primaries contain 78 and 89 assertions, respectively. The number of assertions
formed for a plurality-based primary is proportional to the number of
candidates. NH, involving the most candidates at 34, has 48 assertions to
audit.

\begin{table}
\centering
\caption{Hard (top) and relatively easy (bottom) primaries for which to audit
the last assigned at-large delegate to each candidate. The number of at-large
delegates $D$; the delegate quotas for Biden and Sanders; and the difference
between the remainder of their quotas (divided by $D$) is reported, since this
corresponds to the tightness of equation (1).}
\begin{tabular}{|c|c|c|c|c|c|}
\hline
      &    & \multicolumn{2}{c|}{Quotas} & Rem. & \\
      \cline{3-4}
State & $D$  &  Biden & Sanders & Diff. / $D$ & ASN \\
\hline
CA & 90 & 50.688 & 39.312 & 0.004 & 3.2$\times 10^6$\\
MO & 15 &  9.524 &  5.476 & 0.003 & -- \\
NY & 61 & 47.629 & 13.371 & 0.004 & 486,495 \\
SC & 12 &  8.533 &  3.467 & 0.006 &  34,546 \\
\hline
ME &  5 &  2.050 &  1.993 & 0.19  & 189 \\
AZ & 14 &  8.010 &  5.990 & 0.07  & 120 \\
OR & 13 &  9.948 &  3.052 & 0.07  & 191 \\
\hline
\end{tabular}
\label{tab:CompareHardness}
\end{table}

The computational cost of generating these audit specifications is not
significant.  On a machine with an Intel Xeon Platinum 8176 chip (2.1GHz), and
1TB of RAM, the generation of an audit specification for Wyoming and Alaska
takes 0.3s and 0.4s, respectively.  The time required to generate an audit for
each of the plurality-based primaries in Table \ref{tab:PluralityStates} ranges
from 0.2ms to 0.24s (and 0.03s on average).

\section{Conclusion}
\label{sec:conc}

We provide an effective method for auditing delegate allocation by proportional
representation (the Hamilton method), the first we know of for elections of
this kind.  Usually the audit only requires examining a small number of
ballots.  This could be used for primary elections in the USA and other
elections in Russia, Ukraine, Tunisia, Taiwan, Namibia and Hong Kong.

We provide a version suitable for Democratic primaries in Alaska, Hawaii,
Kansas, and Wyoming, which use a modified form where viability is decided using
IRV.

To audit these elections we defined a new assertion for pairwise differences
and corresponding assorter, which may be useful for auditing other methods.

\afterpage{\clearpage}

\bibliographystyle{splncs04}
\bibliography{BIB,extra}

\end{document}